\documentclass[11pt,reqno]{amsart}

\usepackage{amsfonts,color}
\usepackage{amssymb,amsmath}
\usepackage[a4paper,ignoreall]{geometry}
\geometry{left=3cm, right=3cm, top=3cm, bottom=3cm}

\newtheorem{theorem}{Theorem}

\newtheorem{corollary}{Corollary}

\newtheorem{lemma}{Lemma}
\newtheorem{result}{Result}

\newtheorem{remark}{Remark}

\newtheorem{conjecture}{Conjecture}

\newcommand{\tr}{{\rm Tr}}

\newcommand{\gf}{{\mathbb F}}

\makeatletter
\newcommand{\figcaption}{\def\@captype{figure}\caption}
\newcommand{\tabcaption}{\def\@captype{table}\caption}
\makeatother

\hyphenation{op-tical net-works semi-conduc-tor}

\begin{document}

\title[On a Class of Quadratic Polynomials with no Zeros and its applications]{On a Class of Quadratic Polynomials with no Zeros and its Application to APN  Functions}

\author{Carl Bracken}
\address{Department of Mathematics, School of Physical \& Mathematical Sciences, Nanyang Technology University,  Singapore.}
\email{carlbracken@ntu.edu.sg}

\author{Chik How Tan}
\address{Temasek Laboratories, National University of Singapore, 117411 Singapore}
\email{tsltch@nus.edu.sg}

\author{Yin Tan }
\address{Temasek Laboratories, National University of Singapore, 117411 Singapore}
\email{itanyinmath@gmail.com}
\maketitle

\begin{abstract}
We show that the there exists an infinite family of APN functions of the form 
$F(x)=x^{2^{s}+1} + x^{2^{k+s}+2^k} + cx^{2^{k+s}+1} +  c^{2^k}x^{2^k + 2^s} + \delta x^{2^{k}+1}$, over $\gf_{2^{2k}}$, 
where $k$ is an even integer and $\gcd(2k,s)=1, 3\nmid k$. This is actually a proposed APN family of Lilya Budaghyan and Claude Carlet who 
show in \cite{carlet-1} that the function is APN when there exists $c$ such that the polynomial 
$y^{2^s+1}+cy^{2^s}+c^{2^k}y+1=0$ has no solutions in the field $\gf_{2^{2k}}$. 
In \cite{carlet-1} they demonstrate by computer that such elements $c$ can be found over many fields, 
particularly when the degree of the field is not divisible by $3$.
We show that such $c$ exists when $k$ is even and $3\nmid k$ (and demonstrate why the $k$ odd case only re-describes
an existing family of APN functions). The form of these coefficients is given 
so that we may write the infinite family of APN functions.
\end{abstract}

\keywords{
APN functions; zeros of polynomials; irreducible polynomials.
}

\date{}
\maketitle

\section{Introduction}
Let $\gf_{2^{n}}$ be a finite field.
The number of zeros of the polynomial
\begin{equation}
  \label{Fx}
  P_a(x)=x^{2^s+1}+x+a,\ \ a\in\gf_{2^n}^*
\end{equation}
has been studied in \cite{tor-alex-1,tor-alex-2} as it has applications in several different contexts.
For example, constructing difference sets with Singer parameters (\cite{dillon,dillon-dobbertin}),
finding crosscorrelation between $m$-sequences (\cite{seq-1,seq-2}) and more recently in constructing error correcting codes \cite{BraHel}.
Also, a similar problem concerning the polynomial
$x^{p^s+1}+ax+b$ over $\gf_{p^n}$
has been considered by Bluher in \cite{bluher},
where $p$ is an odd prime.

The following result, due to Helleseth and Kholosha (\cite[Theorem 1]{tor-alex-1}) shows that, when $\gcd(s,n)=1$,
the polynomial $P_a$ can only have none, one or three zeros.

\bigskip

\begin{result}
  \label{number}
  For any $a\in\gf_{2^n}^*$ and a positive integer $l<n$ with $\gcd(l,n)=1$,
  the polynomial $P_a(x)=x^{2^l+1}+x+a$
  has either none, one, or three zeros in $\gf_{2^n}$.
  Further, let $M_i$ denotes the number of $a$ such that $P_a(x)$ has $i$ zeros, then

  (1) If $n$ is odd,
  \begin{eqnarray*}
    \begin{array}{llll}
       & M_0=\frac{2^n+1}3,   & M_1=2^{n-1}-1,   & M_3=\frac{2^{k-1}-1}{3}, \\
    \end{array}
  \end{eqnarray*}

  (2) If $n$ is even,
  \begin{eqnarray*}
    \begin{array}{llll}
       & M_0=\frac{2^n-1}3,   & M_1=2^{n-1},   & M_3=\frac{2^{n-1}-2}{3}. \\
    \end{array}
  \end{eqnarray*}
\end{result}

The condition that $P_a(x)$ has exactly one zero is also given in \cite[Theorem 1]{tor-alex-1}.
In Section II, when $n$ is even, we study the following problem.

  For which $a\in\gf_{2^n}^*$, does $P_a(x)$ have no zeros.

It is well known that $P_a(x)$ are related to other polynomials. More precisely,
for a polynomial of the form $G(x)=x^{2^s+1}+\alpha x^{2^s}+\beta x+\gamma$ over $\gf_{2^n}$, by letting $x=x+\alpha$,
$G(x)$ can be reduced to the form $H(x)=x^{2^s+1}+\alpha x+\beta$. Moreover, by a simple substitution $x=sx$ with $s^{2^i}=\alpha$,
$H(x)$ can be transformed into the form $P_a(x)=x^{2^s+1}+x+a$.
Through the above transformations and the polynomials obtained in Theorem \ref{3termpoly},
we may get the ones with the forms $G$ and $H$.
These types of polynomials are of interest in other contexts.

Before getting into the application on APN functions, we would like to briefly introduce these functions.
A function $F:\gf_{2^n}\rightarrow\gf_{2^n}$ is called \textit{almost perfect nonlinear} (APN)
if the number of solutions in $\gf_{2^n}$ of the equation
$$F(x+a)+F(x)=b$$
is at most $2$, for all $a,b\in\gf_{2^n}, a\ne 0$. We also say it has differential uniformity of 2.  APN functions were introduced by Nyberg in \cite{nyberg},
who defined them as the mappings with highest resistance to differential cryptanalysis. In other words,
APN functions are those for which the plaintext difference $x-y$ yields the ciphertext difference $f(x)-f(y)$
with probability $1/2^{n-1}$.  Since Nyberg's characterization, many new APN functions have been constructed,
see \cite{carlet-1,carl-trinomial,carl-2,carlet-2} and the references there. All the new infinite families
(from 2005) have been quadratic (algebraic degree of 2) multinomials, as is the one in this article.
When computing the differential uniformity of a quadratic function we may use the equation
$$F(x)+F(x+a)+F(a)=0.$$
This is possible as the differential equation is linear.

Another application of APN functions is in the construction of error correcting codes.
Each new APN function yields a new and inequivalent error correcting code with the parameters of the double error correcting BCH code.
In fact, APN functions are said to be inequivalent if the extended BCH-like codes constructed from them are inequivalent codes,
see \cite{carl-trinomial} for details. We refer to inequivalent APN functions as CCZ-inequivalent
(named after Carlet, Charpin and Zinoviev) \cite{ccz}. This is a more general form of equivalence than the previously used affine and extended and affine equivalences.
It is well known that CCZ equivalence preserves the differential and Walsh spectrum of the function.
Some other equivalent descriptions of CCZ-equivalence and its invariants may be found in \cite{pott}.

In \cite{carlet-1}, the authors construct a family of quadratic APN functions provided the existence
of certain polynomials. \\

\begin{result}
  \label{carletapn}
  Let $n$ and $s$ be any positive integers, $n=2k,\gcd(s,k)=1$, and $c,s\in\gf_{2^n}$
  be such that  $s\not\in\gf_{2^k}$. If the equation
  $$ G(x)=x^{2^s+1}+cx^{2^s}+c^{2^k}x+1=0$$
  has no solution $x$ such that $x^{2^k+1}=1$, and in particular if the polynomial
  $G$ is irreducible over $\gf_{2^n}$,
  then the function
  $$F_1(x)=x(x^{2^s}+cx^{2^k}+cx^{2^{k+s}})+x^{2^s}(c^{2^k}x^{2^k}+sx^{2^{k+s}})+x^{2^{k+s}+2^k}$$
  is an APN function.
\end{result}

It was verified in \cite{carlet-1} that the aforementioned irreducible polynomial $G$
exist over many fields $\gf_{2^{2k}}$ with $3\nmid k$ by a computer.
But, up to now, there is no construction of an infinitive family of such polynomials.
In Section III, when $n=2k$ with $k$ even and $3\nmid k$,
we construct the polynomials with the form $G(x)$ without zeros (Theorem \ref{newapn}) by applying the techniques used in Section II.

We will conclude this section by giving some remarks on the APN function $F_1$ in Result \ref{carletapn}.
Define the function $F:\gf_{2^n}\rightarrow\gf_{2^n}$ by
$$F(x)=x^{2^{s}+1} + x^{2^{k+s}+2^k} + cx^{2^{k+s}+1} +  c^{2^k}x^{2^k + 2^s} + \delta x^{2^{k}+1},$$
where $n=2k, \delta\not\in \gf_{2^k}$.
One may check that, if $n=2k$ with $k$ odd, $F$ is CCZ-equivalent to the multinomial APN function
in \cite[Theorem 1]{carl-trinomial} by substituting $x$ with $x+\gamma x^{2^k}$.
Also, if $k$ is even, $F$ is CCZ-equivalent to $F_1$ in Result \ref{carletapn} by substituting $x$ with $x+\gamma x^{2^k+2^s}$,
where $\gamma$ is not a $(2^k-1)$-th power.
Therefore, to prove the existence of the APN function $F_1$ in Result \ref{carletapn},
it is sufficient to consider the case $k$ is even and the existence of $F$.
One reason here to consider the function $F$ instead of the hexanomial one $F_1$ is that $F$ is equivalent to
the No. 4 function in the list of quadratic APN functions over fields with dimension $8$ in Dillon's paper \cite{dillonbanff},
and the No. 4 function is also a 5-term polynomial.
It should be mentioned that we know the family of APN functions we are constructing are inequivalent to other families,
as computer evidence has verified that it is not equivalent to any power mapping when $n=8$ \cite{carlet-1}.
Also for $n=8$ we have checked by computer that $F$ in not equivalent to $x^3 +\tr(x^9)$.
$F$ has $\Gamma$-rank $13200$ (see the definition in \cite{pott}), while $x^3 +\tr(x^9)$ has $\Gamma$-rank $13800$. Furthermore, it must be different
from the other recently discovered non power mapping APN functions as its defined on fields with different degrees.
In fact, it is a distinguishing factor of the family under consideration here that it can be defined on fields with
degrees that are powers of 2. This property is seen as desirable by some (and necessary by others) in cryptographic applications.
The only other APN functions with this property are the power mappings $x^{2^s+1}$ and $x^{2^{2s}-2^s+1}$,
known as the Gold and Kasami functions as well as $x^3 +\tr(x^9)$. These three functions are defined on fields with any degree. \\

\section{A Type of Quadratic Polynomials with no Zeros}
In this section, we will study for which $a\in\gf_{2^n}^*$, the polynomial
$P_a(x)=x^{2^s+1}+x+a$ has no zeros, where $n=2k$ and $\gcd(2k,s)=1$.

\bigskip

\begin{theorem}
  \label{3termpoly}
  Let $P_a(x)=x^{2^s+1}+x+a$ over $\gf_{2^{2k}}$, with $(2k,s)=1$. Then $P_a$ has no zeros if
  there exists a non-cube $b\in\gf_{2^{2k}}$ such that
  $$a=A(b)=\frac{b{(b+1)}^{2^{s}+2^{-s}}}{{(b+b^{2^{-s}})}^{2^s+1}}.$$
\end{theorem}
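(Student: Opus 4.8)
The plan is to argue by contraposition. Writing $\psi(x)=x^{2^s+1}+x=x(x+1)^{2^s}$, the polynomial $P_a$ has a zero exactly when $a\in\im(\psi)$, so it suffices to show: if $A(b)\in\im(\psi)$ then $b$ is a cube in $\gf_{2^{2k}}^{*}$. The formula for $A(b)$ already forces $b\notin\{0,1\}$ and $b+b^{2^{-s}}\neq 0$ (the latter because $b^{2^s}=b$ would put $b$ in $\gf_{2^{\gcd(s,2k)}}=\gf_2$), so no degenerate $b$ needs separate attention.

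First I would parametrize $\im(\psi)$ and collect the arithmetic that carries the argument. For $x\neq 1$ put $u=x/(x+1)$, a bijection of $\gf_{2^{2k}}\setminus\{1\}$ with $x+1=(1+u)^{-1}$ and $x=u(1+u)^{-1}$; then $\psi(x)=u(1+u)^{-(2^s+1)}$, so for $a\neq 0$ one has $a\in\im(\psi)$ if and only if $aw^{2^s+1}+w+1=0$ has a solution $w\in\gf_{2^{2k}}^{*}$ (namely $w=1+u$). As for the arithmetic: since $2k$ is even and $\gcd(2k,s)=1$, $s$ is odd, so $3\mid 2^s+1$; together with $2^s+1\mid 2^{2s}-1$ and $\gcd(2^{2s}-1,2^{2k}-1)=2^{\gcd(2s,2k)}-1=3$ this forces $\gcd(2^s+1,2^{2k}-1)=3$, hence the $(2^s+1)$-th powers in $\gf_{2^{2k}}^{*}$ are precisely the cubes. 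Also $\gcd(2^s-1,2^{2k}-1)=2^{\gcd(s,2k)}-1=1$, so $z\mapsto z^{2^s-1}$ is a bijection of $\gf_{2^{2k}}^{*}$. Finally $\gf_4\subseteq\gf_{2^{2k}}$ and an element is a cube if and only if its norm $N=N_{\gf_{2^{2k}}/\gf_4}$ equals $1$, since $N(x)=x^{(2^{2k}-1)/3}$; note in addition that $\gf_4$ is the fixed field of $x\mapsto x^{2^{2s}}$, because $\gcd(2s,2k)=2$.

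Next, substitute $c=b^{2^{-s}}$, a bijection of $\gf_{2^{2k}}$ which carries cubes to cubes (as $\gcd(2^s,2^{2k}-1)=1$), so that $b=c^{2^s}$ and it is enough to show $c$ is a cube. Using $(z+1)^{2^s}=z^{2^s}+1$ repeatedly one checks that
\[
A(b)=\frac{(c+1)^{2^{2s}+1}}{c\,(c^{2^s-1}+1)^{2^s+1}}.
\]
If $A(b)\in\im(\psi)$, then by the parametrization there is a $w$ with $(c+1)^{2^{2s}+1}w^{2^s+1}+c\,(c^{2^s-1}+1)^{2^s+1}(w+1)=0$, and the substitution $w=(c^{2^s-1}+1)W$ (valid since $c\neq 1$) absorbs the bijective factor $c^{2^s-1}+1$ and reduces this to
\[
(c+1)^{2^{2s}+1}W^{2^s+1}+(c^{2^s}+c)\,W+c=0.
\]

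The crux, which I expect to be the main obstacle, is to extract from this last identity that $c$ is a cube, i.e.\ that $N(c)=1$. The plan is to renormalize (divide through and rescale $W$) to the reduced form $Y^{2^s+1}+Y+A'=0$, replacing $A(b)$ by a new constant $A'$ that is again a rational monomial in $c$ with all exponents controlled modulo $2^{2k}-1$; then, using $2^{2s}+1=(2^s+1)+2^s(2^s-1)$ together with the bijectivity of $z\mapsto z^{2^s-1}$ to rewrite every factor of $A'$ except a single degree-one occurrence of $c$ as a $(2^s+1)$-th power, apply the norm $N$ to the rearranged relation so that all those $(2^s+1)$-th powers (hence cubes) disappear and only $N(c)=1$ remains. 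Then $c$, and hence $b=c^{2^s}$, is a cube, contradicting the hypothesis that $b$ is a non-cube; therefore $A(b)\notin\im(\psi)$, i.e.\ $P_{A(b)}$ has no zeros. The delicate point is pinning down the exact rearrangement that concentrates all the non-cube content into that one factor of $c$, together with bookkeeping for the handful of boundary values of $w$ and $c$; the divisibility facts assembled above are precisely what make this possible.
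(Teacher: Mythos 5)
There is a genuine gap at exactly the point you flag as ``the crux.'' Your setup is fine and consistent with the paper's underlying arithmetic: the contrapositive formulation (a root of $P_{A(b)}$ should force $b$ to be a cube), the parametrization of $\im(\psi)$, the fact that $\gcd(2^s+1,2^{2k}-1)=3$ so that $(2^s+1)$-th powers are exactly the cubes, and the computation $A(b)=(c+1)^{2^{2s}+1}/\bigl(c(c^{2^s-1}+1)^{2^s+1}\bigr)$ with $c=b^{2^{-s}}$ all check out. But the step that actually proves the theorem --- extracting $N(c)=1$ from the solvability of $(c+1)^{2^{2s}+1}W^{2^s+1}+(c^{2^s}+c)W+c=0$ --- is only described as a plan, and the proposed mechanism does not work as stated: the norm $N$ is multiplicative, so it cannot be ``applied to the rearranged relation'' while that relation still has three additive terms. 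To use a multiplicative invariant you must first convert the three-term additive equation into a two-term multiplicative one, and nothing in your outline explains how to do that.

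That conversion is precisely the content of the paper's proof, and it is the one idea your proposal is missing. The paper exhibits the identity
\begin{equation*}
(b+1)\,P_{A(b)}(x)\;=\;b\left(x+\frac{b^{2^{-s}}+1}{b+b^{2^{-s}}}\right)^{2^{s}+1}+\left(x+\frac{b^{2^{-s}+1}+b}{b+b^{2^{-s}}}\right)^{2^{s}+1},
\end{equation*}
i.e.\ it completes the quasi-cube on both sides so that a hypothetical root $x_0$ of $P_{A(b)}$ yields $b=\bigl(L_2(x_0)/L_1(x_0)\bigr)^{2^s+1}$, a cube, unless both affine forms vanish simultaneously, which forces $b=1$. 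This factorization (or an equivalent one in your variables $c,W$) is exactly the ``rearrangement that concentrates all the non-cube content into one factor'' that you defer; without producing it, the argument does not close. I would encourage you to either find the analogous splitting of $(c+1)^{2^{2s}+1}W^{2^s+1}+(c^{2^s}+c)W+c$ into $c\cdot(\text{affine})^{2^s+1}+(\text{affine})^{2^s+1}$, or simply verify the paper's identity by expansion, after which your cube/non-cube dichotomy finishes the proof immediately.
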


\begin{proof}
We start with the following polynomial,
$$K(x)=b{\left(x+\frac{b^{2^{-s}}+1}{b+b^{2^{-s}}}\right)}^{2^{s}+1}+{\left(x+\frac{b^{2^{-s} +1 } +b}{b+b^{2^{-s}}}\right)}^{2^{s}+1}.$$

\bigskip

Note that $b+b^{2^{-s}}$ is always not zero as $b$ is neither $1$ nor $0$.
We will demonstrate that $K$ has no zeros by setting $K(x)=0$, which gives
$$b{\left(x+\frac{b^{2^{-s}}+1}{b+b^{2^{-s}}}\right)}^{2^{s}+1}={\left(x+\frac{b^{2^{-s} +1 } +b}{b+b^{2^{-s}}}\right)}^{2^{s}+1}.$$

\bigskip

As one side of this expression is a cube, while the other is not, the only possible solutions are when its identically zero. This implies
$$x= \frac{b^{2^{-s}}+1}{b+b^{2^{-s}}}= \frac{b^{2^{-s} +1 } +b}{b+b^{2^{-s}}},$$

\bigskip

which in turn implies that $b=1$, which it's not.

Next, we expand $K(x)=0$ and gather terms to obtain,

    $$(b+1)x^{2^{s}+1}+ \left(b(\frac{b^{2^{-s}}+1}{b+b^{2^{-s}}})+ \frac{b^{2^{-s} +1 } +b}{b+b^{2^{-s}}}\right)x^{2^s} 
    +\left( b(\frac{b+1}{b+b^{2^{s}}}) + \frac{b^{2^{s} +1 } +b^{2^s}}{b+b^{2^{s}}} \right)x $$
    $$+ b\left(\frac{b^{2^{-s}}+1}{b+b^{2^{-s}}}\right)^{2^s+1}+\left(\frac{b^{2^{-s} +1 } +b}{b+b^{2^{-s}}}\right)^{2^s+1}=0. $$

\bigskip

This becomes

 $$  (b+1)x^{2^{s}+1}+(b+1)x+\frac{b(b+1)^{2^s+2^{-s}+1}}{(b+b^{2^{-s}})^{2^s+1}}=0.$$

\bigskip

Now, we divide by $b+1$ and with a few simplifications we obtain,

 $$ P_a(x)= x^{2^s+1}+x+\frac{b{(b+1)}^{2^{s}+2^{-s}}}{{(b+b^{2^{-s}})}^{2^s+1}}=0.$$

\bigskip

So $P_a(x)=0$ can not have solutions in $\gf_{2^{2k}}$ and hence $P_a(x)$ has no zeros when $a$ has the required form.
\end{proof}

\bigskip

Particularly, when $s=1$, we may show that $P_a(x)$ has no zeros if and only if $a$ is with the form in Theorem \ref{3termpoly}.
Furthermore we may claim the polynomial is irreducible as it has degree 3.

\bigskip

\begin{corollary}
  \label{sis1}
  The polynomial $P_a(x)=x^3+x+a$ is irreducible over $\gf_{2^{2k}}$, for any integer $k$,  if and only if
  $a=d+d^{-1}$ for some non-cube $d$.
\end{corollary}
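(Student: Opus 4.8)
The plan is to specialize Theorem \ref{3termpoly} to the case $s=1$ and then upgrade the one-directional implication there to an equivalence. First I would note that when $s=1$ we have $2^{-s}\equiv 2^{-1}$, but rather than carry inverse exponents I would instead observe that for $s=1$ the formula $A(b)$ simplifies dramatically. Writing $2^{-1}$ in the exponent as the square-root operation on $\gf_{2^{2k}}$, set $d$ so that $d^2 = b$ — more precisely, I would substitute $b = d^2$ (legal since squaring is a bijection on $\gf_{2^{2k}}$, so $b$ is a non-cube iff $d$ is, because $\gcd(2,2^{2k}-1)=1$ means cubes and non-cubes are preserved under squaring). Then $b^{2^{-1}} = d$, and the numerator $b(b+1)^{2+2^{-1}} = d^2(d^2+1)^{2}(d^2+1)^{2^{-1}} = d^2(d+1)^4(d+1) = d^2(d+1)^5$, while the denominator $(b+b^{2^{-1}})^{3} = (d^2+d)^3 = d^3(d+1)^3$. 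Hence $A(b) = \dfrac{d^2(d+1)^5}{d^3(d+1)^3} = \dfrac{(d+1)^2}{d} = \dfrac{d^2+1}{d} = d + d^{-1}$. This identifies the shape of the coefficient in Theorem \ref{3termpoly} with $d+d^{-1}$, so the "if" direction of the corollary is immediate from that theorem together with the remark that a degree-$3$ polynomial with no zeros over a field is irreducible.

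For the converse — the real content here — I would argue that \emph{every} $a$ for which $P_a(x)=x^3+x+a$ has no zeros arises this way, by a counting argument using Result \ref{number}. The map $d \mapsto d + d^{-1}$ from $\gf_{2^{2k}}^* \setminus\{1\}$ is two-to-one (since $d$ and $d^{-1}$ have the same image, and $d=d^{-1}$ only for $d=1$), and one checks it sends non-cubes to a set of size $\tfrac12(\#\{\text{non-cubes}\} - [\,1\text{ is excluded}\,])$; since $n=2k$ is even, $3 \mid 2^{2k}-1$, so the non-cubes number $\tfrac{2}{3}(2^{2k}-1)$, of which $1$ is a cube, giving exactly $\tfrac13(2^{2k}-1)$ values of $d+d^{-1}$. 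By part (2) of Result \ref{number} with $l=1$ (note $\gcd(1,2k)=1$), the number of $a$ with $P_a$ having no zeros is $M_0 = \tfrac{2^{2k}-1}{3}$. Since the "if" direction shows every $a$ of the form $d+d^{-1}$ (non-cube $d$) lies in the zero-free set, and the two sets have the same finite cardinality, they coincide; this gives the converse.

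The main obstacle I anticipate is the bookkeeping in the counting step: one must be careful that the correspondence $a = d+d^{-1}$ is genuinely injective on the relevant set of (unordered pairs of) non-cubes, i.e. that distinct unordered pairs $\{d,d^{-1}\}$ give distinct $a$ — this follows because $d$ and $d^{-1}$ are exactly the two roots of $T^2 + aT + 1$, which is determined by $a$ — and that the non-cube condition is stable under $d \mapsto d^{-1}$ (clear, since $d$ is a cube iff $d^{-1}$ is). One also needs to confirm $d+d^{-1}\neq 0$, which holds because $d+d^{-1}=0$ forces $d^2=1$, i.e. $d=1$, excluded. Once these small points are checked the cardinalities match on the nose and the equivalence follows with no further computation. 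The irreducibility conclusion is then just the standard fact that a cubic over any field is irreducible precisely when it has no root.
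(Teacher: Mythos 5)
Your proposal is correct and follows essentially the same route as the paper: simplify $A(b)$ at $s=1$ via $b=d^2$ to get $a=d+d^{-1}$, then prove the converse by matching the cardinality $\frac{2^{2k}-1}{3}$ from Result \ref{number}(2) against the image of a 2-to-1 map on the non-cubes. The only cosmetic difference is that you certify the 2-to-1 property via the quadratic $T^2+aT+1$ whereas the paper manipulates $A(x)=A(y)$ directly into $xy(x+y)=x+y$; both are sound.
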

\begin{proof}
  Substituting $s=1$ in $A(b)$, we have
  \begin{eqnarray*}
    A(b) &=& \frac{b(b+1)^{2+2^{-1}}}{(b+b^{2^{-1}})^{2+1}} \\
         &=& \frac{b(b^{2^{-1}}+1)^{2\cdot(2+2^{-1})}}{b^{3\cdot 2^{-1}}(b^{2^{-1}}+1)^3} \\
     &=& \frac{b+1}{b^{2^{-1}}}.
  \end{eqnarray*}
  By Theorem \ref{3termpoly}, if $a=A(b)=\frac{b+1}{b^{2^{-1}}}$, then $P_a(x)$ has no zeros. For a simpler form we let $b=d^2$ and $a$ has the form $d+d^{-1}$ for a non cube $d$.
  Conversely, by Result \ref{number}, there are $\frac{2^n-1}3$ elements $a$ such that $P_a$ has no zeros.
  Therefore, we only need to prove that $|\mbox{Im}(A)|=\frac{2^n-1}3$. Letting $C$ denote the set of non cubes in $ \gf_{2^{2k}}$,
  it can be easily verified that $A(x)=A(x^{-1})$ for any $x\in C$ and $|C|=\frac{2(2^n-1)}3$,
  so it is sufficient to prove that $A$ is 2-to-1.
  Assume there exist $x,y\in C$ such that $A(x)=A(y)$, after simplification, we get $xy(x+y)=x+y$.
  Since $x,y\ne 0$, we have $x=y$ or $xy=1$. This shows that $A$ is 2-to-1 and the proof is finished.
\end{proof}

\bigskip

Using a computer, for small values of $n$, we found
the size of the image set of $A$ is $(2^n-1)/3$.
This implies that, in Theorem \ref{3termpoly}, $P_a$ has no zeros if and only if
$a\in\mbox{Im}(A)$. However, we cannot prove this and leave this as a conjecture.

\bigskip

\begin{conjecture}
  \label{openproblem}
  Let $n=2k$ with $k$ is even and $C$ be the set of non-cubes.
  Define a function $A:C\rightarrow\gf_{2^n}$ by
  $$A(b)=\frac{b{(b+1)}^{2^{s}+2^{-s}}}{{(b+b^{2^{-s}})}^{2^s+1}}.$$
  The polynomial  $P_a(x)=x^{2^s+1}+x+a$ has no zeros if and only if $a\in\mbox{Im}(A)$.
\end{conjecture}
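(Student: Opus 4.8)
The plan is to prove the nontrivial implication --- ``$\Leftarrow$'' being exactly Theorem~\ref{3termpoly} --- by a counting argument in the spirit of Corollary~\ref{sis1}. Write $S=\{a\in\gf_{2^n}^*: P_a \text{ has no zeros in }\gf_{2^n}\}$. By Result~\ref{number} (the even case) $|S|=\frac{2^n-1}{3}$, and by Theorem~\ref{3termpoly} $\mbox{Im}(A)\subseteq S$, so it suffices to prove $|\mbox{Im}(A)|=\frac{2^n-1}{3}$. As in the $s=1$ case treated in Corollary~\ref{sis1}, a direct computation gives $A(b)=A(b^{-1})$ for every $b\in C$, while $b\neq b^{-1}$ on $C$; since $|C|=\frac{2(2^n-1)}{3}$, this already yields $|\mbox{Im}(A)|\le\frac{2^n-1}{3}$. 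Hence the conjecture is equivalent to saying that $A$ is \emph{exactly} $2$-to-$1$ on $C$, i.e.\ that the induced map $\bar A\colon C/(b\sim b^{-1})\to S$ is surjective; injectivity then follows from equality of cardinalities.

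To get surjectivity I would use the representation behind Theorem~\ref{3termpoly}: its proof shows that $A(b)=a$ holds iff $(b+1)P_a(x)=b(x+u_b)^{2^s+1}+(x+v_b)^{2^s+1}$ with $u_b=\frac{b^{2^{-s}}+1}{b+b^{2^{-s}}}$ and $v_b=bu_b$. Conversely, given $a$, imposing such an identity with unknowns $b,u,v$ and eliminating $b$ and $v$ from the coefficient conditions reduces them to $v=\frac{a}{(u+1)^{2^s}}$, $b=\frac{a}{u(u+1)^{2^s}}$, and the single scalar equation
\[
  \Phi_a(u):=u^{2^{2s}+1}+au^{2^{2s}}+(1+a^{2^s})u+a=0.
\]
The crucial observation is that for $a\in S$ \emph{every} zero $u\in\gf_{2^n}$ of $\Phi_a$ produces a non-cube $b$: were this $b$ a cube it would be a $(2^s+1)$-st power $z^{2^s+1}$ with $z\ne1$, and then $\frac{v+zu}{z+1}\in\gf_{2^n}$ would be a zero of $P_a$, contradicting $a\in S$; the possibilities $b\in\{0,1\}$ and $u=1$ are ruled out directly (e.g.\ $\Phi_a(1)=a^{2^s}\ne0$). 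Since $u\mapsto b$ and $b\mapsto u_b$ are mutually inverse on these sets, Conjecture~\ref{openproblem} is equivalent to: \textit{$\Phi_a$ has a zero in $\gf_{2^n}$ for every $a\in S$} (and then, by a short count using the $b\leftrightarrow b^{-1}$ symmetry, exactly two, and so equivalently $\Phi_a$ has at most two zeros in $\gf_{2^n}$ for every $a\in S$). Applying the standard reductions recalled in the introduction --- the shift $u\mapsto u+a$, legitimate since $a^{2^{2s}}+a^{2^s}+1\ne0$ for $a\in S$ (otherwise $a\in\gf_4\setminus\gf_2$ and $a+1$ would be a zero of $P_a$), followed by a scaling --- this becomes: $u^{2^{2s}+1}+u+a''$ has a zero in $\gf_{2^n}$, where $a''=a^{2^s+1}/(a^{2^{2s}}+a^{2^s}+1)^{1+2^{-2s}}$.

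The main obstacle is precisely this reduced statement. The clean none/one/three trichotomy of Result~\ref{number} does not apply to $u^{2^{2s}+1}+u+a''$, because $\gcd(2s,2k)=2$, not $1$; in the non-coprime regime the zero count of such a trinomial may take further values, and one would need the corresponding Bluher / Helleseth--Kholosha type analysis together with a proof that $a''$ always falls into the ``has a zero'' locus when $a\in S$ --- equivalently, that $P_a$ and $\Phi_a$ can never both be zero-free over $\gf_{2^n}$, a duality I would try to read off from the shared structure of the two polynomials but which I expect to require genuinely new input. A more computational alternative is to start from $A(b_1)=A(b_2)$, substitute $b_i=t_i^{2^s}$, clear denominators, factor out the two trivial factors $t_1+t_2$ and $t_1t_2+1$ (the solutions $b_1=b_2$ and $b_1b_2=1$), and show the remaining cofactor $R(t_1,t_2)$ has no zero with both $t_1,t_2$ non-cubes; but $R$ has degree exponential in $s$, so a plain degree bound is hopeless and one must again exhibit structure in its factorization. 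It is exactly because all the relevant degrees are tiny for $s=1$ that Corollary~\ref{sis1} goes through unconditionally.
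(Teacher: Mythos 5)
You should first note that the statement you are proving is not proved in the paper either: it is stated there as Conjecture~\ref{openproblem}, and the authors offer only the remark that, since $A(b)=A(b^{-1})$ and $|C|=\tfrac{2(2^n-1)}{3}$, it would suffice to show that $A$ is $2$-to-$1$. Your first reduction reproduces exactly this observation (combined with the count $M_0=\tfrac{2^n-1}{3}$ from Result~\ref{number} and the inclusion $\mathrm{Im}(A)\subseteq S$ from Theorem~\ref{3termpoly}), and that part is correct as far as it goes. But the proposal then stops at the same wall the authors did: the surjectivity of the induced map onto $S$ --- which after your elimination becomes the claim that $\Phi_a(u)=u^{2^{2s}+1}+au^{2^{2s}}+(1+a^{2^s})u+a$ has a root in $\gf_{2^n}$ for every $a\in S$ --- is precisely the content of the conjecture, and you explicitly concede that you cannot establish it. As you yourself point out, the reduced trinomial has exponent $2^{2s}+1$ with $\gcd(2s,2k)=2$, so the none/one/three trichotomy of Result~\ref{number} does not apply, and no substitute argument is supplied. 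So there is a genuine gap: what you have produced is a correct (and somewhat sharper) restatement of the open problem, not a proof of it.

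A secondary caveat concerns the reduction itself: before asserting that the conjecture is \emph{equivalent} to ``$\Phi_a$ has a zero for every $a\in S$,'' you would need to dispose of the degenerate cases carefully --- $u=0$, $u=1$, $b\in\{0,1\}$, and the possibility that distinct roots $u$ of $\Phi_a$ collapse to the same $b$ or to the pair $\{b,b^{-1}\}$ in a way that spoils the exact $2$-to-$1$ count. You gesture at these checks but do not carry them out. None of this changes the verdict: the paper leaves the statement as a conjecture, and your proposal, by its own admission, does too.
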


\bigskip

\begin{remark}
We outline a possible method for solving this problem.
It can be easily verified that $A(b)=A(b^{-1})$ and since $|C|=\frac{2(2^n-1)}{3}$, it will be sufficient to prove
$A$ is a 2-to-1 mapping.
\end{remark}

\bigskip

By Theorem \ref{3termpoly} and the relationship between the polynomials of the form
$G(x)=x^{2^s+1}+\alpha x^{2^s}+\beta x+\gamma$ and $P_a(x)=x^{2^s+1}+x+a$
mentioned in Section I, we may obtain polynomials with no zeros of the form $G$.
In the next section, we will use a variation of the method used in Theorem \ref{3termpoly}
to find the polynomial with no zeros and with the form
$x^{2^s+1}+cx^{2^s}+c^{2^k}x+1$ over $\gf_{2^{2k}}$ with $k$ is even,
for some $c \in\gf_{2^{2k}}$. It is the existence of this polynomial
that guarantees the existence of the infinite family of APN functions.
We will show that for all relevant fields a coefficient $c$ exists such that the polynomial has no zero's.

\bigskip

\section{A family of $5$-term APN functions}
Through the rest of the paper, we assume $k$ is an even integer and $3\nmid k$.
In this section, we will show that the function
\begin{equation*}
  F(x)=x^{2^{s}+1} + x^{2^{k+s}+2^k} + cx^{2^{k+s}+1} +  c^{2^k}x^{2^k + 2^s} + \delta x^{2^{k}+1}
\end{equation*}
is an APN function over $ \gf_{2^{2k}}$
by choosing a particular type of $c,\delta\in\gf_{2^{2k}}$.
The nature of the coefficients was difficult to find and we require the following lemma to show that they exist.

\bigskip

\begin{lemma}
  \label{betagammaexist}
  Let  $\mathbb{L}= \gf_{2^{2k}}$ and $k,s$ be integers such that $(s,2k)=1$ with $k$ is even and $3\nmid k$.
  Then there exist
  $\beta, \gamma \in \mathbb{L}^*$ such that
  \begin{equation}
    \label{betagamma}
    {\gamma}^{2^{s}+1}+\omega {\beta}^{2^{s}+1} + 1 =0,
  \end{equation}
  where $\omega$ has order 3 in $\mathbb{L}$  and $ {\gamma}^{2^{k}-1} \neq {\beta}^{2^{k}-1}$.
\end{lemma}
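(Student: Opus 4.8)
The plan is to reduce the existence of $\beta,\gamma$ to a counting argument: I will show that the number of pairs $(\beta,\gamma)\in\mathbb{L}^*\times\mathbb{L}^*$ satisfying the algebraic equation $\gamma^{2^s+1}+\omega\beta^{2^s+1}+1=0$ is large enough that not all of them can fall into the ``bad'' set where $\gamma^{2^k-1}=\beta^{2^k-1}$. First I would rewrite \eqref{betagamma} as $\gamma^{2^s+1}=1+\omega^2\beta^{2^s+1}$ (using $\omega^3=1$, so $\omega^{-1}=\omega^2$), and recall that since $\gcd(2^s+1,2^{2k}-1)=\gcd(2^s+1,2^{2k}-1)$; because $\gcd(s,2k)=1$ and $2k$ is even, one has $\gcd(2^s+1,2^{2k}-1)=3$ (this is the standard computation: $2^s+1\mid 2^{2s}-1\mid 2^{2k}-1$, and the gcd is $3$ exactly when $2k/\gcd(s,2k)=2k$ is even and $s$ odd, which holds here). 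Hence $y\mapsto y^{2^s+1}$ is a $3$-to-$1$ map from $\mathbb{L}^*$ onto the subgroup $H$ of cubes, $|H|=(2^{2k}-1)/3$.

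The key step is then to count, for how many $u\in H$ the element $1+\omega^2 u$ again lies in $H$ (equivalently is a nonzero cube): each such $u$ yields $3$ choices of $\beta$ and $3$ of $\gamma$, hence $9$ solution pairs. Writing $\chi$ for a multiplicative character of order $3$ on $\mathbb{L}^*$ (extended by $\chi(0)=0$), the number of $u\in H$ with $1+\omega^2 u\in H\cup\{0\}$ is $\frac1{9}\sum_{v\in\mathbb{L}}\bigl(\sum_{j=0}^2\chi^j(v)\bigr)\bigl(\sum_{i=0}^2\chi^i(\omega^{-2}(v-1))\bigr)$-type expression, which after expanding reduces to $\frac{2^{2k}}{9}$ plus character-sum error terms, each of which is a Jacobi sum of absolute value $2^{k}$ or smaller, so there are at least $\frac{2^{2k}}{9}-O(2^k)$ admissible $u$, hence at least $9\cdot(\frac{2^{2k}}{9}-O(2^k))=2^{2k}-O(2^k)$ solution pairs $(\beta,\gamma)$.

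Next I would bound the bad pairs. The condition $\gamma^{2^k-1}=\beta^{2^k-1}$ means $\gamma/\beta\in\gf_{2^k}^*$ (the kernel of $x\mapsto x^{2^k-1}$, of size $2^k-1$), so writing $\gamma=\lambda\beta$ with $\lambda\in\gf_{2^k}^*$, equation \eqref{betagamma} becomes $(\lambda^{2^s+1}+\omega)\beta^{2^s+1}=1$. For each of the $2^k-1$ values of $\lambda$ this forces $\beta^{2^s+1}=(\lambda^{2^s+1}+\omega)^{-1}$ (when the bracket is nonzero), giving at most $3$ values of $\beta$; so the number of bad pairs is at most $3(2^k-1)<2^{2k}-O(2^k)$ for $k$ large. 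Comparing with the lower bound from the previous paragraph produces a strictly positive count of good pairs once $k$ exceeds an explicit small threshold, and I would dispatch the remaining finitely many cases (the smallest being $k=2$, $k=4$, etc., subject to $3\nmid k$) by direct verification, which is exactly the kind of check the authors report having done by computer.

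The main obstacle will be making the character-sum estimate clean enough: one must be careful that $\omega$ itself is a cube (indeed $\omega\in\gf_4\subseteq\mathbb{L}$ and $\omega$ has order $3$, so $\omega$ is a cube iff $3\mid(2^{2k}-1)/3$, i.e. iff $9\nmid 2^{2k}-1$ — and the hypothesis $3\nmid k$ is precisely what controls the $3$-adic valuation of $2^{2k}-1$, ensuring $\omega^2 u$ behaves generically rather than collapsing the coset structure). I expect that tracking this $3\nmid k$ hypothesis through the Jacobi-sum evaluation — ensuring the ``main term'' is genuinely $2^{2k}/9$ and not accidentally zero — is the delicate point; everything else is routine Weil-bound bookkeeping plus a finite computer check for small $k$.
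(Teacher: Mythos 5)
Your counting strategy is sound and would prove the lemma, but it is a genuinely different route from the paper's. The paper argues entirely elementarily: it first shows that some identity $a+b=\omega c$ among nonzero cubes must exist (else the cubes would be closed under addition and generate a subfield of impossible order), refines this by a second subfield argument to guarantee the resulting relation $1+\alpha^t+\alpha^r=0$ has at least one of $\alpha^t,\alpha^r$ outside $\gf_{2^k}$, and then runs \emph{two} explicit constructions of $(\beta,\gamma)$ from this single relation, showing that if both failed the side condition ${\gamma}^{2^k-1}\neq{\beta}^{2^k-1}$ one would contradict that subfield-membership fact. This works uniformly in $k$ with no estimates and no residual cases. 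Your approach trades that structure for a Weil/Jacobi-sum count: roughly $2^{2k}-O(2^k)$ solution pairs of \eqref{betagamma} versus at most $3(2^k-1)$ ``bad'' pairs with $\gamma/\beta\in\gf_{2^k}^*$, which is cleanly decisive for $k\geq 3$ but forces a separate hand or machine check at $k=2$ (the paper's argument needs none). Your method is more mechanical and in fact proves slightly more (the hypothesis $3\nmid k$ is barely used in it), which is a genuine advantage in robustness; the paper's is more self-contained and explicit about where the solutions come from, which matters since Theorem \ref{newapn} ultimately wants concrete coefficients.

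Two small corrections before you write this up. First, $\gamma^{2^s+1}+\omega\beta^{2^s+1}+1=0$ rearranges to $\gamma^{2^s+1}=1+\omega\beta^{2^s+1}$ in characteristic $2$; the $\omega^2$ you introduce is harmless for the count but is not what the equation says. Second, your parenthetical on cubicity is inverted: $\omega$ (of order $3$) is a cube in $\mathbb{L}^*$ iff $9\mid 2^{2k}-1$, i.e.\ iff $3\mid k$, so under the hypotheses $\omega$ is a \emph{non}-cube. More importantly, your stated worry that the main term could ``accidentally vanish'' is unfounded: whether $\omega$ is a cube or not, $\omega u$ ranges over a full coset of the cubes as $u$ ranges over the cubes, and the cyclotomic-number main term is $2^{2k}/9$ in either case; the only place non-cubicity of $\omega$ genuinely matters is later, in the factoring step of Theorem \ref{newapn}.
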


\begin{proof}
Let $\mathbb{L}^3$ denotes the set of cubes in $\mathbb{L}$. 
First, note that all non-cubes in $\mathbb{L}$ can be written as $\omega c, \omega^2 c$ for some $c \in \mathbb{L}^3$. 
We claim that there exist $a,b,c \in \mathbb{L}^3$ such that 
  \begin{equation}
    \label{abc}
    a+b=\omega c.
  \end{equation}
If this was not the case then the set of cubes would form an additive, as well as a multiplicative, 
group and hence yield a subfield of $\mathbb{L}$ with an impermissible number of elements.

Let $D$ denotes the elements of $\gf_{2^{2k}}$ which are not in $\gf_{2^{k}}$.
Now we consider the possible cubes that satisfy (\ref{abc}) with respect to whether they are in the subfield $\gf_{2^{k}}$ or 
in its complement $D$. If all solutions to  (\ref{abc}) are such that $a,b$ and $c$ are always in $D$ then the cubes in 
$\gf_{2^{k}}$ would form an additive group and yield another impossible subfield. On the other hand if all solutions are such that 
$a,b$ and $c$ are always contained in the subfield $\gf_{2^{k}}$, then the set of cubes in $D$ would form a group under 
addition with $\frac{2^{2k}-2^k}{3}$ elements. But this is not a power of $2$ and is therefore  impossible.
We conclude from the above arguments that there exists a solution $a,b,c$ to (\ref{abc}) 
with at least one element in  $\gf_{2^{k}}$ and another in $D$. 

We divide (\ref{abc}) by $a$ to obtain,
$$1 + \frac{b}{a} + \omega \frac{c}{a}=0. $$
Letting $\alpha$ be primitive in $\mathbb{L}$ and assuming that $b/a=\alpha^t, \omega c/a=\alpha^r$, we rewrite the above equation as,
  \begin{equation}
    \label{alphatr}
    1 + {\alpha}^t + {\alpha}^r=0.
  \end{equation}
  All we can say about $t$ and $r$ is that $3\mid t$ while $3\nmid r$ and that at least one of ${\alpha}^t$  and ${\alpha}^r$ is not in $\gf_{2^{k}}$.

From Equation (\ref{alphatr}) we will give two constructions of the required equation
$$ {\gamma}^{2^{s}+1}+\omega {\beta}^{2^{s}+1} + 1 =0,$$
and show that at least one of them has to allow the condition  $ {\gamma}^{2^{k}-1} \neq {\beta}^{2^{k}-1}$.

Construction 1:  We construct $\beta$  and  $\gamma$  as follows.
Let ${\alpha}^t =  {\gamma}^{2^{s}+1}$ and ${\alpha}^r=\omega {\beta}^{2^{s}+1}$. 
The existence of $\beta,\gamma$ is guaranteed by the fact that $(s,2k)=1$ and then $(2^s+1,2^{2k}-1)=3$, 
so all cubes can be represented as $(2^s+1)$-th powers. This gives the required equation (\ref{betagamma}).
Next we will assume that
${\gamma}^{2^{k}-1} = {\beta}^{2^{k}-1}.$

Therefore,  
$$ {\gamma}^{(2^s+1)(2^{k}-1)} = {\beta}^{(2^s+1)(2^{k}-1)}.$$
We now write this equation in terms of the primitive element $\alpha$ and obtain,
$${\alpha}^{t(2^k-1)} =({\omega}^2{\alpha}^r )^{2^k-1}.$$
As ${\omega}^3 =1$ and $2^k-1$ is divisible by 3, a rearrangement yields
$${\alpha}^{(t-r)(2^k-1)}=1.$$
Because ${\alpha}$ is primitive we know that $(t-r)(2^k-1) \equiv 0 \bmod  2^{2k}-1$. 
This implies that $t=r$ or ${\alpha}^{t-r} \in \gf_{2^{k}}$. The $t=r$ case is easily dismissed as $t$ and $r$ are not equivalent $\bmod 3$. The other case we save for later.

\bigskip

Construction 2:  We construct $\beta$ and $\gamma$ differently from above.
We start with (\ref{alphatr}).
Next, we divide by $ {\alpha}^t$ to obtain,
$$  {\alpha}^{-t} + 1 + {\alpha}^{r-t}=0.$$
Now, we let ${\alpha}^{-t}=  {\gamma}^{2^{s}+1}$ and ${\alpha}^{r-t}= {\omega} {\beta}^{2^{s}+1}$.
Again, we get our required Equation (\ref{betagamma}).
Now assume that
$$ {\gamma}^{2^{k}-1} = {\beta}^{2^{k}-1},$$
which implies,
$$ {\gamma}^{(2^s+1)(2^{k}-1)} = {\beta}^{(2^s+1)(2^{k}-1)}.$$
As before we write this in terms of $\alpha$ and get
  $${\alpha}^{-t(2^k-1)} =({\omega}^2{\alpha}^{r-t} )^{2^k-1}.$$
  This yields
  $${\alpha}^{r(2^k-1)}=1.$$

Note that the $\beta$ and $\gamma$ are different in this construction but the $\alpha$ is the same $\alpha$ as in Consrtuction 1.

In this case $r=0$ or ${\alpha}^{r} \in \gf_{2^{k}}$ . It is clear that $r \neq 0$. 
But if ${\alpha}^{r}$ is in $\gf_{2^{k}}$ then this construction hasn't worked. 
Just like if ${\alpha}^{t-r}$ is in $\gf_{2^{k}}$, then the first construction hasn't worked. 
But we only require one of them to work so when we assume ${\alpha}^{r} \in \gf_{2^{k}}$ 
and ${\alpha}^{t-r} \in \gf_{2^{k}}$ together we get ${\alpha}^{r} \in \gf_{2^{k}}$ and 
${\alpha}^{t} \in \gf_{2^{k}}$  which contradicts an earlier claim and the proof is complete.
\end{proof}

\bigskip

Next we will use the $\beta$  and  $\gamma$ in Lemma \ref{betagammaexist} and a modification of the techniques
in Theorem \ref{3termpoly} to prove the existence of the following family of APN functions.

\bigskip

\begin{theorem}
  \label{newapn}
  Let $s$ and $k$ be integers, with $k$ even, such that $(s,2k)=1$ and $3\nmid k$.
  We choose $\delta \notin \gf_{2^k}$, $\omega$ to have order 3 and $\beta$
  and  $\gamma$ such that  $ {\gamma}^{2^{s}+1}+\omega {\beta}^{2^{s}+1} + 1 =0 $ with  $ {\gamma}^{2^{k}-1} \neq {\beta}^{2^{k}-1}$.
  Then the function
  
  $$ F(x) = x^{2^{s}+1} + x^{2^{k+s}+2^k} + (\omega {\beta}^{2^{s}+2^k} +  {\gamma}^{2^{s}+2^k} )x^{2^{k+s}+1} 
         + (\omega {\beta}^{2^{k+s}+1} +  {\gamma}^{2^{k+s}+1} )  x^{2^k + 2^s} + \delta x^{2^{k}+1}$$

  is an APN function on $\gf_{2^{2k}}$.
\end{theorem}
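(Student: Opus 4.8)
The plan is to reduce the assertion to the no-zeros criterion of Result~\ref{carletapn}, and then to dispose of that criterion by a variant of the construction in Theorem~\ref{3termpoly}. First I would set $c:=\omega\beta^{2^{s}+2^{k}}+\gamma^{2^{s}+2^{k}}$. Raising to the power $2^{k}$ and using $\gamma^{2^{2k}}=\gamma$, $\beta^{2^{2k}}=\beta$ together with $\omega^{2^{k}}=\omega$ (valid since $k$ is even, hence $2^{k}\equiv1\pmod 3$), one finds $c^{2^{k}}=\omega\beta^{2^{k+s}+1}+\gamma^{2^{k+s}+1}$. Thus $F$ is precisely the five-term Budaghyan--Carlet function $x^{2^{s}+1}+x^{2^{k+s}+2^{k}}+cx^{2^{k+s}+1}+c^{2^{k}}x^{2^{k}+2^{s}}+\delta x^{2^{k}+1}$ for this $c$ and the chosen $\delta\notin\gf_{2^{k}}$. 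As observed in the introduction, for $k$ even $F$ is CCZ-equivalent to the function $F_{1}$ of Result~\ref{carletapn} associated with the same $c$, and CCZ-equivalence preserves the APN property; so it suffices to prove that $G(y)=y^{2^{s}+1}+cy^{2^{s}}+c^{2^{k}}y+1$ has no root $y$ with $y^{2^{k}+1}=1$, and I would in fact show it has no root in $\gf_{2^{2k}}$ at all.

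The central device, in analogy with Theorem~\ref{3termpoly}, is the polynomial
$$K(y)=\gamma^{2^{s}+1}\bigl(y+\gamma^{2^{k}-1}\bigr)^{2^{s}+1}+\omega\beta^{2^{s}+1}\bigl(y+\beta^{2^{k}-1}\bigr)^{2^{s}+1}.$$
Expanding $(y+t)^{2^{s}+1}=y^{2^{s}+1}+ty^{2^{s}}+t^{2^{s}}y+t^{2^{s}+1}$ and collecting terms, the coefficient of $y^{2^{s}+1}$ is $\gamma^{2^{s}+1}+\omega\beta^{2^{s}+1}=1$ by (\ref{betagamma}); the coefficient of $y^{2^{s}}$ is $\gamma^{2^{s}+2^{k}}+\omega\beta^{2^{s}+2^{k}}=c$; the coefficient of $y$ is $\gamma^{2^{k+s}+1}+\omega\beta^{2^{k+s}+1}=c^{2^{k}}$ by the identity for $c^{2^{k}}$ above; and the constant term is $\gamma^{2^{k}(2^{s}+1)}+\omega\beta^{2^{k}(2^{s}+1)}=\bigl(\gamma^{2^{s}+1}+\omega\beta^{2^{s}+1}\bigr)^{2^{k}}=1$, again by (\ref{betagamma}) and $\omega^{2^{k}}=\omega$. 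Hence $K=G$ identically. I expect the verification of these four coefficient identities --- that is, the realization that the shifts $\gamma^{2^{k}-1}$ and $\beta^{2^{k}-1}$ are exactly the ones which make (\ref{betagamma}) do all the work --- to be the only genuinely delicate point of the argument.

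It then remains to check that $K$ has no zeros. If $K(y)=0$ and $y=\beta^{2^{k}-1}$, the first summand must vanish, forcing $\gamma^{2^{k}-1}=\beta^{2^{k}-1}$, contradicting the hypothesis of Lemma~\ref{betagammaexist}; the case $y=\gamma^{2^{k}-1}$ is symmetric. Otherwise both linear factors are nonzero and $K(y)=0$ rearranges to
$$\left(\frac{\gamma\,(y+\gamma^{2^{k}-1})}{\beta\,(y+\beta^{2^{k}-1})}\right)^{2^{s}+1}=\omega .$$
Since $(s,2k)=1$ gives $\gcd(2^{s}+1,2^{2k}-1)=3$, the left-hand side is a cube in $\gf_{2^{2k}}^{*}$, whereas $\omega$ is a non-cube: the cubes form the index-$3$ subgroup of $\gf_{2^{2k}}^{*}$, and $3\nmid k$ gives $9\nmid 2^{2k}-1$, so the element of order $3$ does not lie in it. This contradiction shows $K=G$ has no zeros, in particular none on the unit circle $y^{2^{k}+1}=1$, and the conclusion follows from Result~\ref{carletapn} together with the CCZ-equivalence $F\sim F_{1}$. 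Since the existence of admissible $\beta,\gamma$ is exactly Lemma~\ref{betagammaexist} and the remainder is the cube-versus-non-cube dichotomy already exploited in Theorem~\ref{3termpoly} --- with $\omega$ now playing the role of the free non-cube parameter --- I would not expect any serious obstacle beyond the coefficient bookkeeping in the middle step.
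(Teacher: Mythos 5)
Your verification that $G(y)=y^{2^s+1}+cy^{2^s}+c^{2^k}y+1$ has no zeros is correct and is in substance identical to the paper's argument: the paper multiplies $G$ by $\omega^2\beta^{-2^s-1}$ and recognizes the result as $\omega^2(\gamma/\beta)^{2^s+1}(y+\gamma^{2^k-1})^{2^s+1}+(y+\beta^{2^k-1})^{2^s+1}$, which is exactly your identity $K=G$ up to a nonzero constant factor; the cube-versus-non-cube dichotomy (using $\gcd(2^s+1,2^{2k}-1)=3$ and $3\nmid k\Rightarrow 9\nmid 2^{2k}-1$) and the exclusion of the degenerate root via $\gamma^{2^k-1}\neq\beta^{2^k-1}$ are the same in both. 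Your computation of $c^{2^k}$ using $\omega^{2^k}=\omega$ for $k$ even is also correct.

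The gap is in how you connect this to the APN property of $F$. You route the reduction through Result \ref{carletapn} together with the introduction's remark that $F$ is CCZ-equivalent to $F_1$ for $k$ even. That remark is stated without proof (``one may check''), the indicated substitution $x\mapsto x+\gamma x^{2^k+2^s}$ is not even a linear map, so the precise nature of the claimed equivalence is left unspecified, and --- most importantly --- nothing in the paper tells you how the coefficient $c$ of the pentanomial $F$ corresponds to the coefficient pair of the hexanomial $F_1$ under that equivalence. Your phrase ``the function $F_1$ associated with the same $c$'' is precisely the assertion that would need proof: if the equivalence sends your $c$ to some other $c'$, then Result \ref{carletapn} asks that $G_{c'}$, not your $G$, have no roots on the unit circle. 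The paper avoids all of this by a direct, self-contained computation: it writes out $F(ax)+F(ax+a)+F(a)=0$, raises the equation to the power $2^k$ and adds, which forces $x\in\gf_{2^k}$ because $\delta\notin\gf_{2^k}$; the equation then collapses to $\bigl(a^{2^s+1}+a^{2^{k+s}+2^k}+ca^{2^{k+s}+1}+c^{2^k}a^{2^k+2^s}\bigr)(x+x^{2^s})=0$, and dividing by $a^{2^s+1}$ and substituting $y=a^{2^k-1}$ identifies the bracket with $G(y)$. That differential computation is the actual link between $G$ and $F$, and your proposal does not supply it or an adequate substitute.
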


\begin{proof}
The existence of the coefficients is guaranteed from Lemma \ref{betagammaexist}.
To prove the function is APN we need to show that the equation
$$F(x)+F(x+a)+F(a)=0$$
has no more than 2 solutions for all nonzero $a \in \gf_{2^{2k}}$.
Placing $F(x)$ into this equation and then replacing $x$ with $ax$ will give

\begin{eqnarray}
 \label{apn1}
 \begin{array}{lll}
 && a^{2^{s}+1}(x+x^{2^s}) + a^{2^{k+s}+2^k} (x^{2^{k+s}} +x^{2^k}) 
  + a^{2^{k+s}+1}(\omega {\beta}^{2^{s}+2^k} +  {\gamma}^{2^{s}+2^k} )(x+x^{2^{k+s}})  \\

&& \\
 && +a^{2^k + 2^s}(\omega {\beta}^{2^{k+s}+1} +  {\gamma}^{2^{k+s}+1} ) (x^{2^k}  +x^{2^s} ) 
  +a^{2^{k}+1}  \delta (x+ x^{2^k}) = 0.
 \end{array}
\end{eqnarray}

\bigskip

We raise this equation to the $2^k$-th power and obtain a second equation,

\begin{eqnarray*}
 && a^{2^{k+s}+2^k} (x^{2^{k+s}} +x^{2^k})  + a^{2^{s}+1}(x+x^{2^s}) 
 +a^{2^k + 2^s}(\omega {\beta}^{2^{k+s}+1} +  {\gamma}^{2^{k+s}+1} ) (x^{2^k}  +x^{2^s} ) \\
&&\\
 && + a^{2^{k+s}+1}(\omega {\beta}^{2^{s}+2^k} +  {\gamma}^{2^{s}+2^k} )(x+x^{2^{k+s}})  
  +a^{2^{k}+1}  {\delta}^{2^k} (x+ x^{2^k}) = 0.
\end{eqnarray*}

\bigskip

When we add these two expressions together, we simply get

$$a^{2^{k}+1}({\delta} +  {\delta}^{2^k} )(x+ x^{2^k}) = 0.$$

\bigskip

As $a \neq 0$ and $\delta \notin \gf_{2^k}$ this implies that $x \in \gf_{2^k}$.
Then our original equation (\ref{apn1}) becomes

  $$\left( a^{2^{s}+1} + a^{2^{k+s}+2^k} +  a^{2^{k+s}+1}(\omega {\beta}^{2^{s}+2^k} +  {\gamma}^{2^{s}+2^k}) \right. 
  \left. +a^{2^k + 2^s}(\omega {\beta}^{2^{k+s}+1} +  {\gamma}^{2^{k+s}+1})\right)(x+x^{2^s})=0$$

\bigskip

Clearly this will permit only 2 solutions in $x$ provided we can establish that

$$ a^{2^{s}+1} + a^{2^{k+s}+2^k} +  a^{2^{k+s}+1}(\omega {\beta}^{2^{s}+2^k} +  {\gamma}^{2^{s}+2^k})
+a^{2^k + 2^s}(\omega {\beta}^{2^{k+s}+1} +  {\gamma}^{2^{k+s}+1})  \neq 0,$$

\bigskip

for all non zero $a$.
To this end we divide by $a^{2^{s}+1}$ and let $y=a^{2^k-1}$, which gives the following,

$$G(y) = y^{2^s+1}+(\omega {\beta}^{2^{s}+2^k} +  {\gamma}^{2^{s}+2^k})y^{2^s} 
     +(\omega {\beta}^{2^{k+s}+1} +  {\gamma}^{2^{k+s}+1}) y +1$$

\bigskip

which if it has no zeros, proves $F(x)$ is APN.
To show that $G(y)$ has no zeros, we will set it equal to zero and use the techniques in Theorem \ref{3termpoly}
to produce a factoring which allows no solutions.
The $\beta$ and $\gamma$ were chosen very carefully to allow the usage of this technique.

From ${\omega}^2 {\beta}^{-2^s-1}G(y)=0$, we obtain

$$ {\omega}^2 {\beta}^{-2^s-1}y^{2^s+1} + ({\beta}^{2^k-1} +{\omega}^2 {\gamma}^{2^s +2^k} {\beta}^{-2^s-1} ) y^{2^s}  
+ ({\beta}^{2^{k+s}-2^s} +{\omega}^2 {\gamma}^{2^{k+s} +2^s} {\beta}^{-2^s-1})y    + {\omega}^2{\beta}^{-2^s-1}=0.$$

\bigskip

A simple rearrangement of ${\gamma}^{2^{s}+1}+\omega {\beta}^{2^{s}+1} + 1 =0$ will allow us to write the coefficient
of $y^{2^s+1}$ as ${\omega}^2  {(\frac{\gamma}{\beta})}^{2^{s}+1} + 1$.
The coefficient of $y^{2^s}$ can be rewritten as
\begin{equation*}
  {\omega}^2  {(\frac{\gamma}{\beta})}^{2^{s}+1} {\gamma}^{2^k-1} + {\beta}^{2^k-1},
\end{equation*}
 while the coefficient of $y$ can be written as
\begin{equation*}
  {\omega}^2  {(\frac{\gamma}{\beta})}^{2^{s}+1} {\gamma}^{2^{k+s}-2^s} + {\beta}^{{2^{k+s}}-2^s}.
\end{equation*}

Now, using the fact that,
\begin{equation*}
  {\gamma}^{2^k(2^{s}+1)}+\omega {\beta}^{2^k({2^{s}+1}) }= 1,
\end{equation*}
we can alter the last term, ${\omega}^2{\beta}^{-2^s-1}$, as follows,

\begin{eqnarray*}
 {\omega}^2{\beta}^{-2^s-1}&=& {\omega}^2{\beta}^{-2^s-1}({\gamma}^{2^k(2^{s}+1)}+\omega {\beta}^{2^k({2^{s}+1}) }) \\
  &=& {\omega}^2  {(\frac{\gamma}{\beta})}^{2^{s}+1} {\gamma}^{(2^s+1)(2^k-1)} + {\beta}^{(2^s+1)(2^k-1)}.
\end{eqnarray*}

Placing these alternate forms of the coefficients into ${\omega}^2 {\beta}^{-2^s-1}G(y)=0$ yields,

 $$ ( {\omega}^2  {(\frac{\gamma}{\beta})}^{2^{s}+1} + 1 )y^{2^s+1} + ({\omega}^2  {(\frac{\gamma}{\beta})}^{2^{s}+1} {\gamma}^{2^k-1} + {\beta}^{2^k-1})y^{2^s} 
  +({\omega}^2  {(\frac{\gamma}{\beta})}^{2^{s}+1} {\gamma}^{2^{k+s}-2^s} + {\beta}^{{2^{k+s}}-2^s})y $$

 $$ +{\omega}^2  {(\frac{\gamma}{\beta})}^{2^{s}+1} {\gamma}^{(2^s+1)(2^k-1)} + {\beta}^{(2^s+1)(2^k-1)}=0.$$

This implies
\begin{eqnarray*}
 && {\omega}^2  {(\frac{\gamma}{\beta})}^{2^{s}+1}(y^{2^s+1}+{\gamma}^{2^k-1} y^{2^s} + {\gamma}^{2^s(2^k-1)}y+{\gamma}^{(2^s+1)(2^k-1)}) \\
 && =y^{2^s+1}+{\beta}^{2^k-1} y^{2^s} + {\beta}^{2^s(2^k-1)}y+{\beta}^{(2^s+1)(2^k-1)}.
\end{eqnarray*}

Next we factor each side to obtain,
\begin{equation*}
  {\omega}^2  {(\frac{\gamma}{\beta})}^{2^{s}+1}{(y+{\gamma}^{2^k-1}) }^{2^s+1} = {(y+{\beta}^{2^k-1}) }^{2^s+1}.
\end{equation*}

Clearly the right hand side of this expression is a cube while the left hand side is not. So the only possible solutions occur when $y={\gamma}^{2^k-1}={\beta}^{2^k-1}$, but as we have chosen
$\gamma$ and $\beta$ such that ${\gamma}^{2^k-1} \neq {\beta}^{2^k-1}$, we can now say that $G(y)$ has no zeros and the proof is complete.
\end{proof}

\bigskip

We cannot determine the Fourier (Walsh) spectrum of the APN function $F$ in Theorem \ref{newapn} and we leave this as an open problem.
By a computer (with $n=8$), the Fourier spectrum of $F$ is the same as the Gold APN functions, i.e., the spectrum should takes the values $\{0, \pm 2^{n/2}, \pm 2^{(n+2)/2}  \}$.

\bigskip

All other known APN functions have had their Fourier spectra computed, see \cite{brazha} and references therein.
This is done for two reasons. The first is a cryptographic application.
Knowing a functions spectrum can allow us to compute its nonlinearity,
which measures the functions resistance to Matsui's linear attack \cite{mat}.
Secondly, the weight distribution of the codewords in the BCH-like code constructed
from the function is also determined by the functions spectrum.
All codes derived from an APN function will have a minimum distance of $5$,
but the weight distributions differ among some of the six power mapping APN functions.
The infinite families of multi-term quadratic APN functions discovered since 2005
all have the same spectrum as the gold function and we expect the function in this article to be no different.

\bigskip

\begin{conjecture}
  The Fourier spectrum of the APN function $F$ in Theorem \ref{newapn} is $\{0, \pm 2^{n/2}, \pm 2^{(n+2)/2}\}$.
\end{conjecture}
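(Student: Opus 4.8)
The plan is to run the standard quadratic-function argument for Walsh spectra and feed into it the ``cube versus non-cube'' factoring device used in Theorems \ref{3termpoly} and \ref{newapn}. Fix $a\in\gf_{2^n}^{*}$ and consider the component $g_a(x)=\tr\!\big(aF(x)\big)$, a quadratic Boolean function. Its Walsh transform $\widehat{g_a}(b)=\sum_{x}(-1)^{g_a(x)+\tr(bx)}$ takes only the values $0$ and $\pm2^{(n+\ell_a)/2}$, where $\ell_a=\dim_{\gf_2}\mathcal K_a$ and $\mathcal K_a=\{x:\tr(a\,B(x,y))=0\ \text{for all }y\}$ is the radical of the alternating form of $g_a$, with $B(x,y)=F(x+y)+F(x)+F(y)$ the (symmetric, zero-diagonal) bilinear form of $F$. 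Since $n=2k$ is even and the rank of an alternating form is even, $\ell_a$ is even; hence the Fourier spectrum of $F$ is contained in $\{0,\pm2^{n/2},\pm2^{(n+2)/2}\}$ --- and, once one checks that both $\ell_a=0$ and $\ell_a=2$ actually occur, equals it --- exactly when $\ell_a\in\{0,2\}$ for every $a\neq0$, i.e. when no $\mathcal K_a$ has dimension $\geq4$. So everything reduces to a uniform bound on the number of zeros of one linearized polynomial.

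First I would write down that polynomial. Expanding $\tr(a\,B(x,y))$ and transporting all dependence on $y$ to the adjoint via $\tr(uv^{2^j})=\tr(u^{2^{-j}}v)$ gives $\mathcal K_a=\ker L_a$, where
$$L_a(x)=G\,x^{2^{s}}+G^{2^{-s}}x^{2^{-s}}+E\,x^{2^{k+s}}+E^{2^{k-s}}x^{2^{k-s}}+H\,x^{2^{k}},$$
with $G=a+a^{2^k}$ and $H=a\delta+(a\delta)^{2^k}$ lying in $\gf_{2^k}$, and $E=a\,c_1+a^{2^k}c_2^{2^k}$, where $c_1,c_2$ are the two middle coefficients of $F$. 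Because $k$ is even, $\omega^{2^k}=\omega$ and hence $c_2^{2^k}=c_1$, so in fact $E=(a+a^{2^k})c_1=G\,c_1$ and $L_a(x)=0$ reads
$$G\big(x^{2^{s}}+c_1x^{2^{k+s}}\big)+G^{2^{-s}}\big(x+c_2x^{2^k}\big)^{2^{-s}}+H\,x^{2^{k}}=0.$$
If $a\in\gf_{2^k}^{*}$ then $G=0$ and $L_a(x)=Hx^{2^k}$ with $H\neq0$ (since $\delta\notin\gf_{2^k}$), so $\mathcal K_a=\{0\}$ and $\ell_a=0$; the whole difficulty is the case $a\notin\gf_{2^k}$, where $G\neq0$.

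For $a\notin\gf_{2^k}$ I would divide $L_a(x)=0$ by $G$ and then, as in the proof of Theorem \ref{newapn}, split a hypothetical element of $\mathcal K_a$ according to whether it lies in $\gf_{2^k}$ or in $D=\gf_{2^{2k}}\setminus\gf_{2^k}$. On $\gf_{2^k}$ the exponents $2^{k\pm s}$ and $2^k$ collapse to $2^{\pm s}$ and $1$, turning $L_a(x)=0$ into a three-term $\gf_{2^k}$-linearized equation governed by Result \ref{number} and Corollary \ref{sis1}. For the part in $D$ the plan is to eliminate the ``conjugate'' monomials $x^{2^{-s}}$ and $x^{2^{k-s}}$ by combining $L_a(x)=0$ with suitable Frobenius powers of itself, reaching an equation in $x$ and $x^{2^k}$ alone which, after a substitution $x\mapsto\lambda x+\mu x^{2^k}$ tuned to the pattern of Theorem \ref{3termpoly}, should take the normalized form $z^{2^s+1}+z+b=0$; the cube/non-cube dichotomy plus Result \ref{number} then cap the number of solutions. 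The chosen relations $\gamma^{2^s+1}+\omega\beta^{2^s+1}+1=0$ and $\gamma^{2^k-1}\neq\beta^{2^k-1}$ are meant to enter exactly as in the proof of Theorem \ref{newapn}, to rule out the degenerate value and keep the total count at most $4$; together with the parity of $\ell_a$ this forces $\ell_a\in\{0,2\}$.

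The hard part --- and the reason this is only conjectured --- is the very last reduction. In the APN proof the coefficients of \eqref{apn1} are self-conjugate under $x\mapsto x^{2^k}$ up to the interchange $c_1\leftrightarrow c_2$, so $F(x)+F(x+a)+F(a)=0$ and its $2^k$-th power collapse, on addition, to the one-term relation $\delta(x+x^{2^k})=0$; that is what forces $x\in\gf_{2^k}$ at no cost. The polynomial $L_a$ has no such symmetry: the coefficient of $x^{2^{s}}$ is $G$ while that of $x^{2^{k+s}}$ is $E=Gc_1$, and these coincide only if $c_1=1$, so $L_a(x)+\big(L_a(x)\big)^{2^k}=0$ no longer collapses and one cannot push $x$ into $\gf_{2^k}$ for free. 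The elimination replacing this step has to use several Frobenius conjugates of $L_a(x)=0$ while keeping tight control of the degree of the resulting polynomial in $x$: a bound of $8$ on $|\mathcal K_a|$ would already be enough (by parity), but even that seems to demand the full strength of the structure of $\beta,\gamma$ and of the identity $\gamma^{2^k(2^s+1)}+\omega\beta^{2^k(2^s+1)}=1$ already exploited in Theorem \ref{newapn}. Once such a bound is secured, $\ell_a\in\{0,2\}$ for all $a\neq0$ and the conjectured Gold-like spectrum $\{0,\pm2^{n/2},\pm2^{(n+2)/2}\}$ follows.
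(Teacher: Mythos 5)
The paper offers no proof of this statement: it is explicitly left as an open problem, supported only by a machine computation for $n=8$, so there is nothing on the authors' side to compare your argument with. That said, your setup is the right one and the preparatory computations are correct: the Walsh values of a quadratic component $\tr(aF(x))$ are indeed $0$ and $\pm 2^{(n+\ell_a)/2}$ with $\ell_a=\dim\mathcal{K}_a$ even, your linearized polynomial $L_a$ is the correct adjoint form of $\tr(a\,B(x,y))$ (including the identities $c_2^{2^k}=c_1$ and hence $E=Gc_1$, which use $k$ even so that $\omega^{2^k}=\omega$), and the case $a\in\gf_{2^k}^{*}$ does give $\ell_a=0$. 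So you have correctly reduced the conjecture to showing $\dim\ker L_a\le 2$ (equivalently $|\ker L_a|\le 8$, by the parity of $\ell_a$) for every $a\notin\gf_{2^k}$.

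But that bound is where the proof has to happen, and it is exactly the step you leave open. The devices you propose to import do not transfer as stated: Result \ref{number} counts roots of the affine polynomial $x^{2^s+1}+x+a$, not kernel elements of a five-term linearized polynomial with exponents $2^{\pm s}$, $2^{k\pm s}$, $2^{k}$; the cube-versus-non-cube factoring of Theorems \ref{3termpoly} and \ref{newapn} operates on polynomials of degree $2^s+1$ and there is no indication that $L_a$ admits an analogous factorization after your proposed substitutions; and, as you observe yourself, the raise-to-the-$2^k$-and-add collapse that drives the proof of Theorem \ref{newapn} fails for $L_a$ because its coefficients are not conjugate-symmetric, so one cannot force $x\in\gf_{2^k}$ and the subsequent splitting over $\gf_{2^k}$ versus $D$ never gets off the ground. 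What you have is therefore a program, not a proof, and the conjecture remains open after your attempt just as it does in the paper.
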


\bigskip

\section{Conclusions}
In this paper, we considered the polynomial $P_a(x)=x^{2^s+1}+x+a$ over $\gf_{2^n}$ with
$\gcd(n,s)=1$. It is shown that if $a=\frac{b{(b+1)}^{2^{s}+2^{-s}}}{{(b+b^{2^{-s}})}^{2^s+1}}$
for some non-cube $b$, then $P_a(x)=0$ has no solutions. Moreover, we conjecture that the converse of this result is also true. 
In particular, we show that $x^3+x+a$ is irreducible if and only if $a=d+d^{-1}$, for some non cube $d$.
By applying the techniques used here we obtain an infinite family of polynomials with no zero's
of the form $x^{2^s+1}+cx^{2^s}+c^{2^k}x+1$ over $\gf_{2^{2k}}$ with $k$ even and $\gcd(2k,s)=1, 3\nmid k$. 
This guarantees the existence of the infinite family of quadratic APN functions proposed by Budaghyan and Carlet in \cite{carlet-1}.

\section*{Acknowledgement}
We would like to thank Dr. Longjiang Qu for careful reading of this manuscript and  many helpful discussions.

\end{document}